\documentclass[aps,pra,twocolumn,preprintnumbers,superscriptaddress,nofootinbib]{revtex4-1}
\usepackage{diagbox}
\usepackage{colortbl}
\usepackage{amsmath}
\usepackage{mathrsfs}
\usepackage{amsthm}
\usepackage{braket}
\usepackage{setspace}
\usepackage{titlesec}
\usepackage{amsfonts}
\usepackage{graphicx}
\usepackage{tikz}
\usepackage{pgfplots}
\usetikzlibrary{graphs}
\usepackage{color}
\usepackage{setspace}
\usepackage{bm}
\usepackage[shortlabels]{enumitem}
\usepackage{IEEEtrantools}
\usepackage{bbold}
\usepackage{mathtools}
\usepackage{amssymb}
\usepackage{multirow}
\usepackage{boldline, makecell, booktabs}

\usepackage{changes}

\newtheorem{defn}{Definition}
\newtheorem{lemma}{Lemma}

\newcommand{\Target}{\mathrm{PR}_v(ab|xy)}
\newcommand{\PNrounds}{P_{\boldsymbol{\alpha}}(\mathbf{a},\mathbf{b}|\mathbf{x},\mathbf{y})}

\begin{document}

\author{Boris Bourdoncle}
\affiliation{ICFO-Institut de Ci\`encies Fot\`oniques, The Barcelona Institute of Science and Technology, 08860 Castelldefels (Barcelona), Spain}
\author{Stefano Pironio}
\affiliation{Laboratoire d'Information Quantique, CP 224, Universit\'e libre de Bruxelles (ULB), 1050 Bruxelles, Belgium}
\author{Antonio Ac\'in}
\affiliation{ICFO-Institut de Ci\`encies Fot\`oniques, The Barcelona Institute of Science and Technology, 08860 Castelldefels (Barcelona), Spain}
\affiliation{ICREA-Instituci\'o Catalana de Recerca i Estudis Avan\c cats, Lluis Companys 23, 08010 Barcelona, Spain}

\title{Quantifying the randomness of copies of noisy Popescu-Rohrlich correlations}

\begin{abstract}
In a no-signaling world, the outputs of a nonlocal box cannot be completely predetermined, a feature that is exploited in many quantum information protocols exploiting non-locality, such as device-independent randomness generation and quantum key distribution. This relation between non-locality and randomness can be formally quantified through the min-entropy, a measure of the unpredictability of the outputs that holds conditioned on the knowledge of any adversary that is limited only by the no-signaling principle. This quantity can easily be computed for the noisy Popescu-Rohrlich (PR) box, the paradigmatic example of non-locality.

In this paper, we consider the min-entropy associated to several copies of noisy PR boxes. In the case where $n$ noisy PR-boxes are implemented using $n$ non-communicating pairs of devices, it is known that each PR-box behaves as an independent biased coin: the min-entropy per PR-box is constant with the number of copies. We show that this doesn't hold in more general scenarios where several noisy PR-boxes are implemented from a single pair of devices, either used sequentially $n$ times or producing $n$ outcome bits in a single run. In this case, the min-entropy per PR-box is smaller than the min-entropy of a single PR-box, and it decreases as the number of copies increases.
\end{abstract}

\maketitle

\section{Introduction}
Devices that are non-locally correlated, i.e., which violate Bell inequalities, necessarily produce outcomes that cannot be perfectly determined \cite{Valentini2002273}. This statement is true even according to theories that can deviate from the standard quantum formalism, provided that they satisfy the no-signaling principle according to which local measurements made on a subsystem cannot reveal information about measurements performed on distant subsystems. 

This relation between non-locality, randomness, and no-signaling can be  illustrated through the paradigmatic example of the noisy PR-box. Suppose that Alice has a device where she can input $x\in\{0,1\}$ (a measurement setting) and which outputs $a\in\{0,1\}$ (the measurement outcome). Similarly, Bob has a device where he can input $y\in\{0,1\}$ and which outputs $b\in\{0,1\}$. Let us assume that Alice's and Bob's devices behave according to the joint probabilities
\begin{equation}\label{eq:pr}
\Target=\begin{cases}
3/8+v/{8}&\text{ if } a+b=xy\mod 2\\
1/8-v/{8}&\text{ otherwise,}
\end{cases}
\end{equation}
parameterized by the number $v\in[-1,1]$. The case $v=1$ corresponds to the ideal PR-box \cite{Popescu1994Quantum}, $v=-1$ to uniform white noise, and the intermediate cases to noisy-PR boxes given by a mixture of these two possibilities. The devices violate the Clauser-Horne-Shimony-Holt (CHSH) inequality \cite{Clauser1969Proposed}, hence are non-local, when $v\in\, ]0,1]$. 
They can be realized through measurement on a quantum state when $v\leq \sqrt{2}-1$, with $v=\sqrt{2}-1$ corresponding to Tsirelson-correlations, i.e.,  correlations reaching the maximal quantum violation of the CHSH inequality \cite{Tsirelson1980Quantum}. 

We can quantify how random Alice's outcome $a$ is by considering how predictable it is to some third party, Eve. Eve could hold information allowing her to guess Alice's outcome $a$ with greater probability that what directly follows from the distribution (\ref{eq:pr}). For instance, it could be that this distribution is realized as a mixture of underlying distributions which are individually less random than (\ref{eq:pr}) and that Eve is aware of which one of these underlying distributions is currently realized. More generally, Eve could hold some physical system correlated to Alice's and Bob's devices and performing a measurement on her system could reveal useful information about Alice's outcome. Denoting $z$ Eve's measurement choice and $e$ the corresponding outcome, we can describe this situation through a tripartite distribution $P(abe|xyz)$, whose marginal distribution for Alice and Bob corresponds to the noisy PR-correlations: $\sum_e P(abe|xyz)=\Target$.

It can easily be shown that, no matter what Eve's strategy is, the maximum probability $G_1(v)$ with which she can guess Alice's outcome $a$ is\footnote{Anticipating a notation that we will use later on, the subscript ``1'' in $G_1(v)$ refers to a single copy of the noisy PR-box (\ref{eq:pr}).}
\begin{equation}
G_1(v)=1-\frac{v}{2}\,.
\label{SolutionN1}
\end{equation}
This value holds under the only assumption that Alice, Bob, and Eve's systems satisfy the no-signaling constraints
\begin{align}
P(ab|xyz)&=P(ab|xy),\nonumber\\
P(ae|xyz)&=P(ae|xz),\label{eq:nosig}\\
P(be|xyz)&=P(be|yz),\nonumber
\end{align}
stating that the input of one's party cannot affect the marginal distribution of the two other remote parties. Eq.~(\ref{SolutionN1}) is proven in Appendix~\ref{app:OneRound} and Eve's optimal strategy is sketched in Fig.~\ref{fig:SolN1}. 

\begin{figure}
\includegraphics{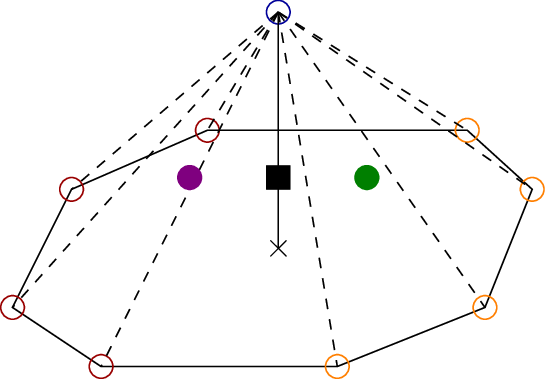}
\caption{Schematic representation of the adversarial strategy that achieves the value given in Eq. \eqref{SolutionN1}. The base of the pyramid represents the CHSH facet of the local set. The eight extreme points on this facet are the eight deterministic strategies attaining CHSH=2. The blue point on top represents the PR-box. For some fixed inputs $x,y$, the local points on the left side (in red) yield the same value for $a$, say 0, and the ones on the right side (in orange) yield the other possible value, say 1. In order to guess the value of $a$, Eve can prepare either a mixture of the red and blue points (in purple), and guess $a=0$, or a mixture of the orange and blue points (in green), and guess $a=1$. On average, these two points reproduce Alice and Bob's expected distributions, $\mathrm{PR}_v$, here depicted by a square.}
\label{fig:SolN1}
\end{figure}

The optimal guessing probability (\ref{SolutionN1}) represents a measure of the randomness of noisy PR-correlations. It is strictly smaller than 1, and thus Alice's outcome cannot be perfectly predicted by Eve, when $v>0$, i.e., when Alice's and Bob's devices are non-local. It is also common to use the min-entropy $H_1(v)=-\log_2 G_1(v)$ to express the randomness of (\ref{SolutionN1}) in bits \cite{Konig2009Operational}. For instance, the ideal PR-correlations have $H_1(1)=1$ bit of randomness, while the Tsirelson-correlations have $H_1(\sqrt{2}-1)=1-\log_2(3-\sqrt{2})\simeq 0.335$ bits of randomness.

This quantitative trade-off between non-locality and randomness can also be determined under the assumption that the entire quantum formalism holds instead of only no-signaling \cite{Colbeck2006Quantum, Pironio2009Device, Pironio2011Random}, and can be established for other type of correlations than the noisy PR ones \cite{Barrett2005No, Barrett2006Maximally, Nieto2014Using}. Such measures of randomness have not only a fundamental interest but also find direct application in device-independent quantum cryptography protocols, such as randomness certification \cite{Pironio2011Random,Colbeck2006Quantum} and key distribution \cite{Barrett2005No, Acin2006From, AcinPRL2007, Masanes2014Full}.

In this work, we investigate the randomness of noisy-PR correlations in a scenario where Alice and Bob make $n$ observations each, instead of a single one. This operationally corresponds to Alice and Bob using $n$ times a single pair of devices, instead of a single one, either because they use $n$ devices or a single device repeatedly $n$ times. They thus end up with, respectively, input strings $\mathbf{x}=(x_1,\ldots,x_n)$ and $\mathbf{y}=(y_1,\ldots,y_n)$ and output strings $\mathbf{a}=(a_1,\ldots,a_n)$ and $\mathbf{b}=(b_1,\ldots,b_n)$. We assume that Alice and Bob's observations are distributed according to 
\begin{equation}\label{eq:prod}
P(\mathbf{a},\mathbf{b}|\mathbf{x},\mathbf{y})=\prod_{i=1}^n \mathrm{PR}_v(a_ib_i|x_iy_i)\,.
\end{equation}
This means that, from Alice and Bob's perspective, their outputs are the same as if they had used $n$ identical and independent copies of the noisy PR-correlations (\ref{eq:pr}). This example was also studied in \cite{Fitzi2010NonLocality}, where the authors investigate its so-called local part. Note that, even though their results have some similarities with ours, there is no direct connection between the local part of some non-local correlation and its unpredictability.

We here ask how predictable Alice's string $\mathbf{a}$ is to some third party, Eve, under the sole assumption of no-signaling. In full generality, we can again characterize correlations among Alice, Bob, and Eve through a $2n+1$-partite distribution $P(\mathbf{a},\mathbf{b},e|\mathbf{x},\mathbf{y},z)$, consisting of $n$ input and output bits for Alice, $n$ input and output bits for Bob, and a single input and output symbol for Eve. 

There are, however, different ways to generalize the no-signaling conditions (\ref{eq:nosig}) to our $2n+1$-partite situation, depending on how Alice and Bob's experiment is performed (see Fig.~\ref{fig:NSconditions}). For instance, Alice and Bob could use $n$ separated pairs of devices, where each pair $i=1,\ldots,n$ receives inputs $x_i,y_i$ and produces outputs $a_i,b_i$. They could use a single pair of devices $n$ times in succession, where now $x_i,y_i$ and $a_i,b_i$ refers to the inputs and outputs at the $i$th round. A further possibility is that Alice holds some big device where she directly inputs $n$-bit strings $\mathbf{x}$ and get $n$-bits output strings $\mathbf{a}$, and similarly Bob holds a big device accepting $n$-bit inputs $\mathbf{y}$ and producing $n$-bit outputs $\mathbf{b}$. To each such physical scenario is associated a different set of no-signaling constraints corresponding to limitations on how the input $x_i$ (or $y_i$) can causally influence the output strings $\mathbf{a}$ and $\mathbf{b}$. In what follows, we will define in more details four natural scenarios and their associated no-signaling constraints. 

In all cases, one possible strategy for Eve is to guess each of Alice's output $a_i$ independently using the optimal single-copy strategy yielding (\ref{SolutionN1}).
However, there may exist clever strategies that performs better than this independent guessing strategy. This is so even though the correlations (\ref{eq:prod}) look identical and independent from Alice's and Bob's perspective, because they need not look that way from Eve's point of view. Indeed, the probabilities $P(\mathbf{a},\mathbf{b}|\mathbf{x},\mathbf{y},e,z)$ conditioned on Eve's knowledge do not need to take a product form, only their average $\sum_e P(e|z) P(\mathbf{a},\mathbf{b}|\mathbf{x},\mathbf{y},e,z)=P(\mathbf{a},\mathbf{b}|\mathbf{x},\mathbf{y})$, corresponding to tracing out Eve, should. In particular, Eve can design the correlations $P(\mathbf{a},\mathbf{b}|\mathbf{x},\mathbf{y},e,z)$ in such a way that the distribution of an output pair $(a_i,b_i)$ is correlated with other values of inputs and outputs. This enables Eve to increase the predictability of some particular sequences, conditioned on the value of $e$, while keeping Alice and Bob's marginal distributions unchanged.

We show that this is indeed what happens for several no-signaling scenarios of interest. The single-copy guessing probability (\ref{eq:guess}) thus does not correctly reflect the randomness of noisy PR-boxes in a situation involving $n$ copies of such correlations. 

Beside its fundamental interest, our work is also motivated by the problem of understanding better the security of quantum key distribution and quantum random number generation against no-signaling adversaries, whose status is not clear at the moment \cite{Arnon2012Limits, Salwey2016Stronger}. Previous works have looked at how much information a no-signaling adversary can obtain about the outcomes of $n$ PR-boxes after privacy amplification \cite{Arnon2012Towards}. We look here at her information before privacy amplification, i.e., on the raw output string. Though the results that we present do not have yet direct implications for the security of quantum key distribution and quantum random number generation schemes, they contribute to a better characterization of adversarial strategies.

Before presenting and discussing our results, we introduce in the next section the problem that we consider in more detail.

\section{Definitions}
\subsection{General scenario}
We use bold variables to denote $n$-bit strings, e.g., $\mathbf{x}=(x_1,\ldots,x_n)\in\{0,1\}^n$. Subscripts are used to denote certain sub-strings of these $n$-bit strings, e.g. $\mathbf{x}_{\leq i}=(x_1,\ldots,x_i)$, $\mathbf{x}_{>i}=(x_{i+1},\ldots,x_n)$ or $\mathbf{x}_{\setminus i}=(x_1,\ldots,x_{i-1},x_{i+1},\ldots,x_n)$.
The subscript 0 corresponds to the empty string: $\mathbf{x}_0=\emptyset$. 

As in the introduction, we consider a situation where Alice, after interacting $n$ times with one or several devices, ends up with input and output strings $\mathbf{x}$ and $\mathbf{a}$. Similarly, Bob ends up with input and output strings $\mathbf{y}$ and $\mathbf{b}$. We assume, as in Eq.~(\ref{eq:prod}), that the joint probabilities $P(\mathbf{a},\mathbf{b}|\mathbf{x},\mathbf{y})$ correspond to $n$-copies of noisy PR-correlations.

We assume that Eve holds a system that may be correlated to Alice's and Bob's devices, a situation that can be described, as in the introduction, through a distribution $P(\mathbf{a},\mathbf{b},e|\mathbf{x},\mathbf{y},z)$ that is compatible with Alice and Bob marginals. Under the assumption that these correlations cannot be used for signaling between Eve and Alice-Bob, we can describe things in an alternative, convenient way that does not directly involves Eve's input $z$. Indeed, as explained in \cite{Acin2016Necessary}, any measurement that Eve can perform on her system can be interpreted as a choice of a convex decomposition
\begin{equation}\label{eq:ABdec}
\sum_e P(e) P_e(\mathbf{a},\mathbf{b}|\mathbf{x},\mathbf{y})
=\prod_{i=1}^n \mathrm{PR}_v(a_ib_i|x_iy_i)
\end{equation}
of Alice's and Bob's devices and her measurement outcome $e$ can be interpreted as indicating one part of this decomposition. Conversely, any convex decomposition (\ref{eq:ABdec}) of Alice and Bob's system can be realized by Eve by choosing an appropriate measurement on her system. From now on, we adopt this view.

The components $P_e(\mathbf{a},\mathbf{b}|\mathbf{x},\mathbf{y})$ in the above decomposition are not arbitrary but should satisfy certain no-signaling constraints reflecting the causal relations that follow from the way Alice and Bob use their devices. We consider four types of such no-signaling constraints.

\begin{defn}[Full-NS] 
The probabilities $P_e(\mathbf{a},\mathbf{b}|\mathbf{x},\mathbf{y})$ are fully no-signaling (Full-NS) if, for every $1\leq i\leq n$,
\begin{align}
P_{e}(\mathbf{a}_{\setminus i},\mathbf{b}|\mathbf{x},\mathbf{y})&=P_{e}(\mathbf{a}_{\setminus i},\mathbf{b}|\mathbf{x}_{\setminus i},\mathbf{y})\label{fullns1}\\
P_{e}(\mathbf{a},\mathbf{b}_{\setminus i}|\mathbf{x},\mathbf{y})&=P_{e}(\mathbf{a},\mathbf{b}_{\setminus i}|\mathbf{x},\mathbf{y}_{\setminus i})\label{fullns2}\,.
\end{align}
\label{def:FNS}
\end{defn}
In the above definition, it is to be understood that Eq.(\ref{fullns1}) holds for all possible values of $\mathbf{a}_{\setminus i},\mathbf{b},\mathbf{x},\mathbf{y},e$ and Eq.(\ref{fullns2}) for all possible values of $\mathbf{a},\mathbf{b}_{\setminus i},\mathbf{x},\mathbf{y},e$. The marginal distribution $P_{e}(\mathbf{a}_{\setminus i},\mathbf{b}|\mathbf{x},\mathbf{y})$ is obtained by summing the whole probability table of Alice and Bob over the missing variables: $P_{e}(\mathbf{a}_{\setminus i},\mathbf{b}|\mathbf{x},\mathbf{y})=\sum_{\mathbf{a}_i}P_{e}(\mathbf{a},\mathbf{b}|\mathbf{x},\mathbf{y})$ and analogously for $P_{e}(\mathbf{a},\mathbf{b}_{\setminus i}|\mathbf{x},\mathbf{y})$. The other definitions that we introduce below should be understood similarly.

This condition corresponds to having $2n$ parties which satisfy all possible pairwise no-signaling conditions. It is operationally equivalent to using $2n$ boxes that are all causally independent, i.e., no communication is allowed between any of them, even though they can be correlated \cite{Masanes2011Secure, Hanggi2009Efficient}. See Fig.~\ref{fig:NSconditions} (a) for a schematic representation of this scenario.

\begin{defn}[ABNS] 
The probabilities $P_e(\mathbf{a},\mathbf{b}|\mathbf{x},\mathbf{y})$ are Alice-Bob no-signaling (ABNS) if
\begin{align}
P_e(\mathbf{b}|\mathbf{x},\mathbf{y})&= P_e(\mathbf{b}|\mathbf{y})\\
P_e(\mathbf{a}|\mathbf{x},\mathbf{y})&=P_e(\mathbf{a}|\mathbf{x})\,.
\end{align}
\label{def:ABNS}
\end{defn}

In this case, no-signaling holds only between Alice and Bob, i.e., there is no communication between them. It means that the inputs used by Bob cannot be inferred from Alice's marginal distribution, even if the information from all the rounds is grouped together, and vice-versa. However, there is no constraint on the internal structure of Alice's or Bob's own marginal. For instance, output $a_1$ could depend on the values of all the inputs $\mathbf{x}=(x_1\ldots x_n)$. 

It is equivalent to considering one big device on Alice's side (respectively Bob's side), that receives as input the string $\mathbf{x}=(x_1\ldots x_n)$ (resp. $\mathbf{y}$) and produces at once the output string $\mathbf{a}=(a_1\ldots a_n)$ (resp. $\mathbf{b}$), or $n$ devices on each side that are used in parallel and can communicate freely amongst themselves \cite{Hanggi2013Impossibility}. This condition is schematically depicted in Fig.~\ref{fig:NSconditions} (b).

\begin{defn}[TONS] 
The probabilities $P_e(\mathbf{a},\mathbf{b}|\mathbf{x},\mathbf{y})$ are time-ordered-no-signaling (TONS) if, for every $0\leq i < n$
\begin{align}
\label{eq:TONSAlice} P_e(\mathbf{a}_{\leq i},\mathbf{b}|\mathbf{x},\mathbf{y})&=P_e(\mathbf{a}_{\leq i},\mathbf{b}|\mathbf{x}_{\leq i},\mathbf{y})\\
\label{eq:TONSBob} P_e(\mathbf{a},\mathbf{b}_{\leq i}|\mathbf{x},\mathbf{y})&=P_e(\mathbf{a},\mathbf{b}_{\leq i}|\mathbf{x},\mathbf{y}_{\leq i})\,.
\end{align}
\label{def:TONS}
\end{defn}
In this case, no-signaling holds between Alice and Bob as for ABNS (take $i=0$). In addition, future rounds (which corresponds to values greater than $i$) have no influence on past rounds (which corresponds to values smaller than $i$) on each side. It describes the situation where two devices are separated from each other during the entire run of the experiment and are used sequentially, while keeping a memory of the past events \cite{Arnon2012Limits, Salwey2016Stronger}. The schematic representation of this condition can be found in Fig.~\ref{fig:NSconditions} (c).
 
Note that Full-NS $\subset$ TONS $\subset$ ABNS.

\begin{defn}[WTONS] 
The probabilities $P_e(\mathbf{a},\mathbf{b}|\mathbf{x},\mathbf{y})$ are weakly time-ordered no-signaling (WTONS) if for all $0\leq i < n$,
\begin{align}
P_e(\mathbf{a}_{\leq i},\mathbf{b}_{\leq i+1}|\mathbf{x},\mathbf{y})&=P_e(\mathbf{a}_{\leq i},\mathbf{b}_{\leq i+1}|\mathbf{x}_{\leq i},\mathbf{y}_{\leq i+1})\\
P_e(\mathbf{a}_{\leq i+1},\mathbf{b}_{\leq i}|\mathbf{x},\mathbf{y})&=P_e(\mathbf{a}_{\leq i+1},\mathbf{b}_{\leq i}|\mathbf{x}_{\leq i+1},\mathbf{y}_{\leq i})\,.
\end{align}
\label{def:WTONS}
\end{defn}
This condition is a weakened version of the time-ordered-no-signaling condition, i.e. TONS $\subset$ WTONS. Future rounds cannot influence past round, and no-signaling holds at each individual round, but, contrarily to ABNS and TONS, no-signaling between Alice and Bob does not hold throughout the entire run of the experiment. It means that Alice's marginal at round $i$ is independent of $\mathbf{x}_{>i}$ and $\mathbf{y}_{\geq i}$, but can depend on $\mathbf{x}_{\leq i}$ and $\mathbf{y}_{< i}$, and likewise for Bob. It describes the situation where two devices are used sequentially and have memory, and where these two devices can moreover communicate between successive rounds \cite{Arnon2012Towards}. See Fig.~\ref{fig:NSconditions} (d) for a schematic representation.

The TONS condition naturally emerges if the two devices can be shielded from each other during the entire experiment, e.g., if $n$ pairs of entangled particles are stored in memory. Yet in many practical situations, pairs of entangled particles are produced one round after the other and distributed to each device. This requires that the devices be opened between each round, at which point some communication between the two devices could happen. WTONS characterizes this situation.

Note that if we consider, as in the WTONS scenario, that communication between the boxes cannot be prevented between the successive rounds, one could also argue that one could not prevent the outcome bits from directly leaking to Eve, thus rendering the notion of guessing probability irrelevant. This point is pertinent in the case of protocols such as Device-Independent Quantum Key Distribution (DIQKD), where Alice and Bob are indeed two distant agents aiming to share some private bits at distant locations. In this case, there is indeed no reason to believe that the information flowing from Alice to Bob could not also flow from Alice to Eve. However, for protocols such as Device-Independent Random Number Generation (DIRNG), Alice and Bob can be thought of as two fictional agents in a single laboratory, as the goal is here to obtain private bits in a unique location. In this situation, we believe that the TONS and WTONS scenarios are two relevant models.     

\begin{figure}
\includegraphics{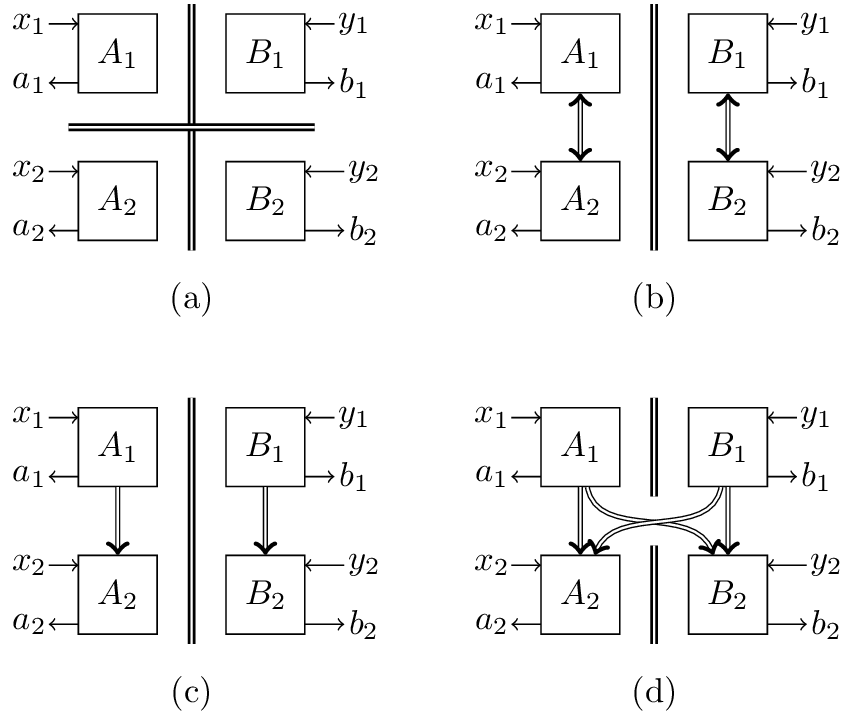}
\caption{Representations of the Full-NS (a), ABNS (b), TONS (c) and WTONS (d) conditions. The double lines indicate that there is no information flow between the corresponding boxes, while the double arrows indicate the direction of the information flow.}
\label{fig:NSconditions}
\end{figure}

\subsection{Quantifying randomness}
Let us now define how we quantify the randomness of Alice's output string $\mathbf{a}$. We do this by considering to which extent Eve can correctly guess these outcomes. As stated previously, we can assume in full generality that from Eve's point of view Alice and Bob's correlations are decomposed as a convex sum (\ref{eq:ABdec}), where with probability $P(e)$ Eve holds the variable $e$ indicating that Alice and Bob's outputs are distributed as $P_e(\mathbf{a},\mathbf{b}|\mathbf{x},\mathbf{y})$. Given $e$, there is then some optimal guess $e\rightarrow \boldsymbol{\alpha}\in\{0,1\}^n$ that Eve can make about Alice's outcome $\mathbf{a}$. We can of course write the convex decomposition (\ref{eq:ABdec}) of Alice and Bob's correlations directly in term of Eve's guess $\boldsymbol{\alpha}$, and thus effectively assume that the arbitrary variable $e$ is already equal to the optimal guess $\boldsymbol{\alpha}$. From now on, we thus assume that $e=\boldsymbol{\alpha}$.

For any $\boldsymbol{\alpha}$, and given that Alice and Bob make the specific choice of inputs $\mathbf{x}^*$ and $\mathbf{y}^*$, the probability that Eve correctly guesses Alice's output is simply the probability $P_{\boldsymbol{\alpha}}(\mathbf{a}=\boldsymbol{\alpha}|\mathbf{x}^*,\mathbf{y}^*)$ that Alice's output $\mathbf{a}$ is equal to Eve's guess $\boldsymbol{\alpha}$. Eve's average guessing probability is thus
\begin{equation}\label{eq:guess}
\sum_{\boldsymbol{\alpha}} P(\boldsymbol{\alpha})P_{\boldsymbol{\alpha}}(\mathbf{a}=\boldsymbol{\alpha}|\mathbf{x}^*,\mathbf{y}^*)\,.
\end{equation}
This average guessing probability holds assuming a given decomposition of Alice and Bob's correlations. But as stated previously, Eve can freely use any convex decomposition of her liking through an appropriate choice of measurement on her system. Eve's maximal guessing probability $G_n(v)$ is thus obtained by maximizing (\ref{eq:guess}) over all possible convex decompositions that are compatible with Alice and Bob's observations and with the no-signaling scenario corresponding to their use of their devices. Eve's optimal guessing probability is thus the solution of the following optimization problem
\begin{align}\label{opt-P-general}
G_n(v) =  & \max   \sum_{\boldsymbol{\alpha}} P(\boldsymbol{\alpha})P_{\boldsymbol{\alpha}}(\mathbf{a}=\boldsymbol{\alpha}|\mathbf{x}^*,\mathbf{y}^*)\\
& \mathrm{s.t.}\phantom{x}  \sum_{\boldsymbol{\alpha}} P(\boldsymbol{\alpha})\PNrounds=\prod_{i=1}^n \mathrm{PR}_v(a_i,b_i|x_i,y_i)\nonumber\\
& \phantom{max} \forall \alpha, \ \PNrounds \text{ is NS}\nonumber
\end{align}
where NS denotes one of the no-signaling constraints NS=\{Full-NS, ABNS, TONS, WTONS\}, depending on which scenario is considered.

It is implicit in the above formulation that Eve's choice of convex decomposition -- and thus that the optimal guessing probability -- depends on the inputs $\mathbf{x}^*$ and $\mathbf{y}^*$ that are chosen by Alice and Bob. We therefore assume that the specific inputs $\mathbf{x}^*$ and $\mathbf{y}^*$ used by Alice and Bob are communicated to Eve. Indeed, our aim is to quantify the fundamental, intrinsic randomness generated at Alice's side, even in a situation where all details of the experimental set-up are known to Eve. From an applied point of view, it also means that this quantity is relevant for a protocol where some actions are taken based on some specific values of inputs $(x^*,y^*)$, fixed in advance: the bound on the predictability is valid even if the protocol is known to Eve.

The optimal guessing probability $G_n(v)$ may therefore depend on the input choices $\mathbf{x}^*$ and $\mathbf{y}^*$ and there could thus be different possible ways to quantify the randomness of Alice's output: e.g., by considering the worst-case over all inputs choices or the expected guessing probability with respect to some  probability distribution for Alice and Bob's inputs. In our case, however, thanks to the symmetries of the noisy PR-correlations (\ref{eq:pr}), the same optimal value $G_n(v)$ is obtained for any possible choice of inputs $\mathbf{x}^*$ and $\mathbf{y}^*$. Indeed, as we show in Appendix~\ref{app:Symmetries}, given any solution to~\eqref{opt-P-general} for a given pair of inputs $\mathbf{x}^*,\mathbf{y}^*$, one can construct a corresponding solution for any other pairs of inputs that yields the same guessing probability. Thus we can simply quantify the randomness of Alice's output through the guessing probability associated to any given input choices. For specificity, we will use the choice $\mathbf{x}^*=\mathbf{y}^*=\mathbf{0}=(0_1,\ldots,0_n)$ in the following.

Note that, even if in our problem the optimal guessing probability $G_n(v)$ is the same for any input choices of Alice and Bob, the particular convex decomposition achieving this optimal value will vary with the choice of inputs. As we said, Eve can remotely choose the optimal decomposition by selecting a measurement on her system when she is informed about Alice and Bob's input choices, and thus $G_n(v)$ correctly reflects the probability with which she can guess Alice's output in the most general scenario. However, this requires Eve to hold some ``coherent memory'', and to delay her measurement until when she is informed about Alice's and Bob's inputs. One could also consider, as in \cite{PironioPRX2013}, a situation where Eve has no such ``coherent memory'' and is forced to commit to a decomposition before Alice's and Bob's inputs are known. Here we choose to quantify randomness in the former scenario because it corresponds to the worst possible setting where Eve's knowledge is maximal. Furthermore, it also corresponds to the scenario where the security of RNG and QKD against no-signaling adversaries is not clearly established.

Finally, note that in the case of the Full-NS, ABNS, and TONS constraints, $P_{\boldsymbol{\alpha}}(\mathbf{a}|\mathbf{x}^*,\mathbf{y}^*)=P_{\boldsymbol{\alpha}}(\mathbf{a}|\mathbf{x}^*)$ and thus Eve's strategy does not actually need to depend on the knowledge of Bob's input $\mathbf{y}^*$. This, however, is not necessarily the case for the WTONS constraints for which no-signaling does not hold between Alice and Bob. This is why we include explicitly $\mathbf{y}^*$ in (\ref{opt-P-general}).

\section{Basic observations and known results}  
Before presenting our actual results -- the optimal solutions to (\ref{opt-P-general}) for different values of $n$, noise levels $v$, and different no-signaling conditions -- let us make some basic observations.

\subsection{Bounds on $G_n(v)$ from $G_1(v)$}

For $n=1$, all the no-signaling conditions NS=\{Full-NS, ABNS, TONS, WTONS\} that we have introduced reduce to the usual no-signaling conditions between Alice and Bob:
\begin{align}
P_e(a_1|x_1,y_1)&=P_e(a_1|x_1)\\
P_e(b_1|x_1,y_1)&=P_e(b_1|y_1)\,.
\end{align}
As we have claimed in the introduction, the optimal guessing probability $G_1(v)$ is known in this case and is given by Eq.~(\ref{SolutionN1}). 

Before attempting to find the guessing probabilities $G_n(v)$ for values of $n>1$, we can already observe that they necessarily satisfy the trivial bounds
\begin{equation}
G_1^n(v)\leq G_n(v)\leq G_1(v)\,
\end{equation}
or explicitly
\begin{equation}\label{eq:bounds}
\left(1-\frac{v}{2}\right)^n\leq G_n(v)\leq 1-\frac{v}{2}\,.
\end{equation}
The lower-bound $G^n(v)\geq G^n_1(v)$ follows from the fact that a possible strategy is for Eve to guess each output bit of Alice $a_i$ independently using the optimal strategy for a single copy of PR-correlations. The probability to guess correctly the entire string $\mathbf{a}=(a_1,\ldots,a_n)$ is then simply the product of the probability to guess correctly each bit independently. There could be, however, more clever strategies, hence this only represents a lower-bound on the $n$-copy guessing probability $G_n(v)$. 

The upper-bound $G_n(v)\leq G_1(v)$ follows from the fact that the probability to guess correctly the entire $n$-bit string $\mathbf{a}$ should not be higher than the probability to guess only one of the $a_i$.

For $v=0$, corresponding to the point at which the noisy PR-correlations become local, the lower-bound and upper-bound coincide and give the trivial value $G_n(0)=1$, as expected since any local correlations admit a purely deterministic explanation.

For $v=1$, corresponding to perfect PR-correlations, it is possible to show that the lower-bound is saturated, i.e., $G_n(1)=(1/2)^n$. This follows from the fact that the product of $n$ perfect PR-correlations is a vertex of the polytopes associated with any of the no-signaling constraints NS=\{Full-NS, ABNS, TONS, WTONS\}, see Appendix \ref{app:ProductPR}.

The values $G_n(v)$ for the different no-signaling constraints that we consider here thus all coincide at the extremities of the interval $v\in[0,1]$ and our problem is to understand how the guessing probability varies as a function of $n$ for $0<v<1$. 

\subsection{$G_n(v)$ in the Full-NS scenario}

For the Full-NS scenario, it happens that the independent strategy discussed above is actually the optimal strategy. This directly follows from the results of Appendix A of \cite{Masanes2014Full}, where it is shown that for every $P(\mathbf{a},\mathbf{b}|\mathbf{x},\mathbf{y})$ that is Full-NS, the following bound holds
\begin{equation}\label{eq:mas}
G_n(v)\leq \sum_{\mathbf{a},\mathbf{b},\mathbf{x},\mathbf{y}} \prod_{i=1}^n \beta(a_i,b_i,x_i,y_i) P(\mathbf{a},\mathbf{b}|\mathbf{x},\mathbf{y})\,,
\end{equation}
where the coefficients $\beta$ are defined as
\begin{equation}\label{eq:beta}
\beta(a,b,x,y)=\begin{cases}
1/8&\text{ if } a+b=xy\mod 2\\
5/8&\text{ otherwise.}
\end{cases}
\end{equation}
In the case where $P(\mathbf{a},\mathbf{b}|\mathbf{x},\mathbf{y})=\prod_{i=1}^n\mathrm{PR}(a_i,b_i|x_i,y_i)$, it is easily seen that this yields $G_n(v)\leq \left(1-\frac{v}{2}\right)^n$. Since this value can be trivially attained with the independent strategy discussed above, we have that 
\begin{equation}\label{eq:istrat}
G_n(v)=\left(1-\frac{v}{2}\right)^n\,.
\end{equation}
The min-entropy
\begin{equation*}
H_n(v)=-\log_2 G_n(v)=-n\log_2\left(1-\frac{v}{2}\right)=nH_1(v)
\end{equation*}
thus scales linearly with $n$: each new use of the noisy-PR correlations brings $H_1(v)$ new bits of randomness.
Interestingly, we show below that this is no longer the case in the other no-signaling scenarios that we consider.

\section{Results}

\begin{figure}
\includegraphics{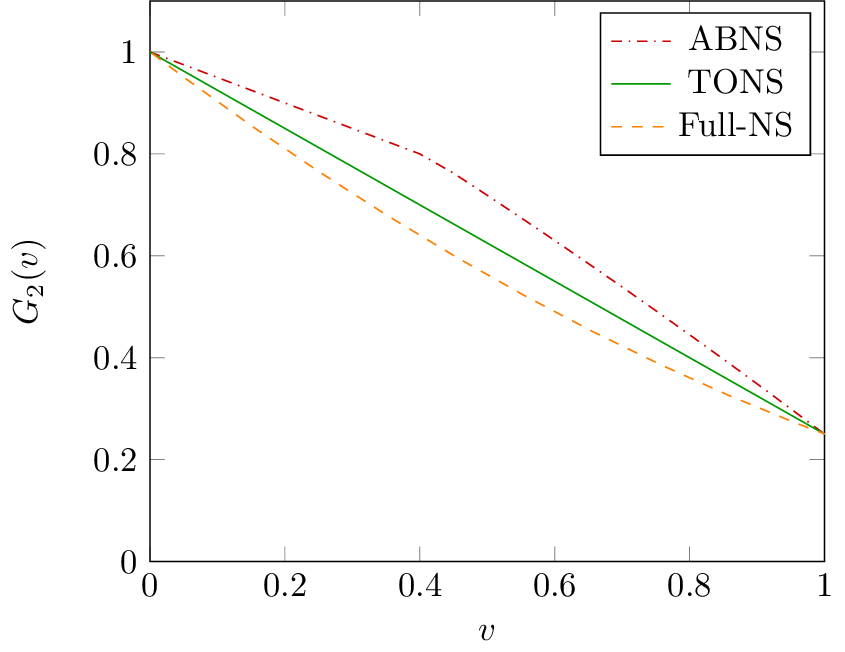}
\caption{Guessing probabilities for $n=2$.}
\label{fig:N2}
\end{figure}

The optimization problem (\ref{opt-P-general}) is a linear program. This is easily seen by rewriting it, as in \cite{Nieto2014Using,Bancal2014More}, in term of the unnormalized probabilities $\tilde P_{\boldsymbol{\alpha}}(\mathbf{a},\mathbf{b}|\mathbf{x},\mathbf{y})=P({\boldsymbol{\alpha}})\PNrounds$. For $n=\{2,3,4,5\}$, we numerically solved this linear program for the three sets ABNS, TONS, WTONS.
 
We find in each case that the optimal guessing probability is higher than the one obtained with the independent strategy corresponding to the lower-bound in (\ref{eq:bounds}).

Furthermore, for the cases $n=\{2,3\}$, we solve (\ref{opt-P-general}) by finding explicit solutions to its primal and dual forms, and thus obtain the analytical expressions of $G_n(v)$. 
\subsection{$n=\{2,3\}$}

The analytical solutions to the optimization problem (\ref{opt-P-general}) are given in Table~\ref{tab:1}, and are plotted as a function of $v$ in Figures~\ref{fig:N2} and \ref{fig:N3}. For the Full-NS scenario, we recover, as expected, the value (\ref{eq:istrat}) corresponding to the independent strategy. In the three other cases ABNS, TONS, WTONS, we find that the guessing probability is strictly higher than this value for all $0<v<1$.

\begin{figure}
\includegraphics{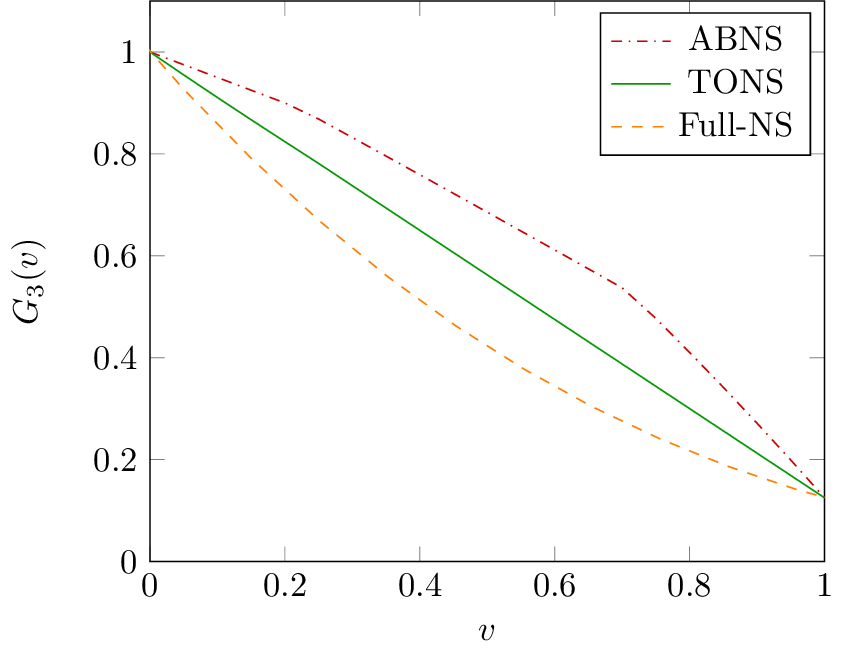}
\caption{Guessing probabilities for $n=3$. The change of behaviour at $v= -2+\sqrt{5}$ indicated in Table~\ref{tab:1} for the TONS and WTONS scenarios is not apparent because the polynomial is very close to the line in this region.}
\label{fig:N3}
\end{figure}

The solutions are detailed in~\cite{Bourdoncle2018Quantifying}. We now make a few observations. First of all, for $n=2$ and $v\leq \sqrt{2}-1$, the guessing probability $G_2(v)$ in the ABNS scenario saturates the trivial upper-bound in (\ref{eq:bounds}) given by the single-round guessing probability, i.e., $G_2(v)= G_1(v)=1-v/2$. This establishes, independently of our dual solutions, that our explicit strategy is optimal in this case.

\begin{table}[t]
\begin{tabular}{l}
\hline\\
 $n=2$, ABNS:\\[6pt]
\qquad$G_2(v) = 
\begin{cases}
1-\frac{1}{2}v&\text{ if } v \leq \sqrt{2}-1 \\
\frac{9}{8}- \frac{3}{4} v-\frac{1}{8} v^2 &\text{ if } v \geq \sqrt{2}-1
\end{cases}$ \vspace{1em}\\
\hline \\
 $n=2$, TONS and WTONS:\\[6pt]
\qquad$G_2(v)=1-\frac{3}{4}v$
\vspace{1em}\\
\hline \\
 $n=3$, ABNS:\\[6pt]
\qquad$G_3(v) = 
\begin{cases}
1-\frac{1}{2} v &\text{ if } v \leq v_1 \\
\frac{67}{64}- \frac{45}{64} v-\frac{3}{64} v^2+\frac{1}{64}v^3 &\text{ if } v_1 \leq v \leq v_2 \\
\frac{41}{32}- \frac{27}{32} v-\frac{9}{32}  v^2-\frac{1}{32}v^3 &\text{ if } v \geq v_2
\end{cases}
$\\
\qquad\parbox{0.9\linewidth}{where $v_1$ is the unique root of $x^3-3x^2-13x+3$ in $[0,1]$
 $(v_1 \approx 0.22038)$ and $v_2$ the unique root of $x^3+5x^2+3x-5$ in $[0,1]$ $(v_2 \approx 0.70928)$.}
\vspace{1em}\\
\hline \\ 
 $n=3$, TONS and WTONS:\\[6pt]
\qquad$G_3(v) = 
\begin{cases}
1-\frac{29}{32} v+\frac{1}{8}v^2+\frac{1}{32}v^3 &\text{ if } v \leq \sqrt{5}-2 \\
1-\frac{7}{8} v &\text{ if } v \geq \sqrt{5}-2
\end{cases}$\vspace{1em}\\
\hline
\end{tabular}
\caption{Guessing probabilities for $n=2,3$.}
\label{tab:1}
\end{table}

Conceptually, it is surprising that the guessing probability does not decrease from $n=1$ to $n=2$ as it means that it is not more difficult for Eve to guess two outcome bits of Alice than it is to guess a single one. More surprisingly, the region where this happens corresponds to $v\leq \sqrt{2}-1$, i.e., to the region where the noisy PR-correlations admit a quantum representation. We do not know whether this is merely a computational coincidence or whether it has some deeper meaning about the structure of the quantum set.

For $n=3$, there is again a region, corresponding to $v\leq 0.22038$, where $G_3(v)=G_1(v)$. This region is smaller than the previous one, but on the other hand, Eve can now guess three successive bits of Alice with the same error probability as when guessing a single one. 

For the TONS and WTONS scenarios, we find that the two solutions coincide. Interestingly, we find that the optimal solution in the case $n=2$ is linear in $v$, as for $n=1$. For $n=3$, this is only true if $v$ is above the threshold $v\geq \sqrt{5}-2$. We now intuitively explain how the strategies we have found work and the origin of this linear behavior.

In our model, Eve distributes the correlations for Alice and Bob and can adapt the decomposition for each round depending on what happened in the previous rounds. For the first round, there is no past, so she prepares the mixture of extremal local and non-local points compatible with Alice and Bob's probabilities depicted in Figure~\ref{fig:SolN1}.

The distribution for the second round depends on what happened in the first one \footnote{Let us stress that Eve doesn't need to acquire this knowledge for the strategy to be valid. This is merely a way to give an intuition about the strategy by decomposing it sequentially, while the attack is entirely designed prior to the experiment.}. If Alice's first output is such that Eve's guess is correct, the devices on the second round behave in a more predictable way, i.e., their correlations correspond to a more local point. This allows Eve to improve her guess on the two generated outputs. On the other hand, if Alice's first output is such that Eve's guess is wrong, the subsequent events are of no importance to the value of the guessing probability: the devices can be maximally non-local, i.e., a PR-box.

These different possibilities can then combine in such a way that Alice and Bob's marginal distributions are as expected, if Eve accurately adjusts the amount of non-locality in the second round based on the value of $v$. For $n=2$, the balance is such that the guessing probability is linear in $v$. 

One could hope to straightforwardly extend this strategy to any number of rounds and that it would imply that the guessing probability be equal to $1-(2^n-1)/(2^n)\cdot v$ for all $n$. This is however not the case. To understand why, note that, in order to constantly improve her guess, Eve needs to prepare distributions that have more and more predictable outcomes, i.e., points that are closer and closer to the local set. But when a point is local, its outcomes are perfectly known to Eve: its predictability cannot increase anymore. We observe that, when this happens at some round, $G_n(v)$ is less than $1-(2^n-1)/(2^n)\cdot v$ for subsequent rounds. This phenomenon happens after a certain number of rounds, which depends on the value of $v$. For $n=3$, we observe it for $v \leq -2+\sqrt{5}$.

\subsection{$n=\{4,5\}$}

\begin{figure}
\includegraphics{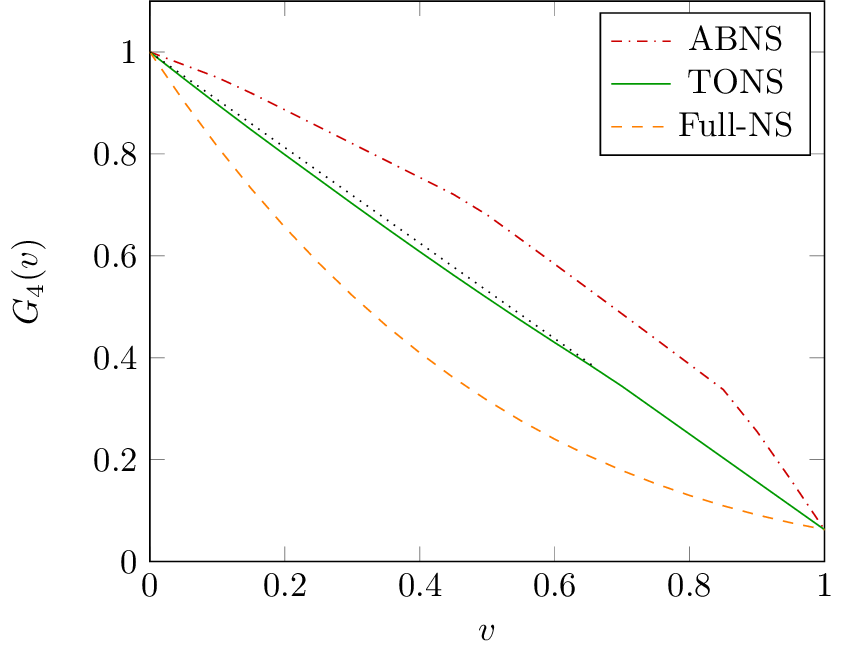}
\caption{Guessing probabilities for $n=4$. We add the line interpolating $(0,1)$ and $(1,\frac{1}{16})$ (black dotted line) to emphasize the breakdown of linear dependence of the TONS guessing probability for some $v$.}
\label{fig:N4}
\end{figure}

We then numerically solved (\ref{opt-P-general}) for $n=\{4,5\}$. The results are plotted in Figures~\ref{fig:N4} and \ref{fig:N5}. In this case, we did not attempt to find the analytical expressions of $G_4(v)$ and $G_5(v)$: keeping track of the dual's variables, which grow exponentially with $n$, becomes demanding, while a numerical result is sufficient for our purpose. 

As before, we observe that for $v$ small enough, there is a region, that gets smaller as $n$ increases, where $G_5(v)=G_4(v)=G_1(v)$ in the ABNS scenario.

For the TONS and WTONS scenarios, the guessing probability depends linearly on $v$ (as $1-15/16\cdot v$ for $n=4$ and as $1-31/32 \cdot v$ for $n=5$ ) when $v$ is large enough. The minimal $v$ for which this happens increases and gets closer to 1 as $n$ increases. 

However, while for $n\leq 3$, the guessing probability is the same for the TONS and WTONS scenarios, this is no longer the case when $n \geq 4$, except in the linear regime for $v$ close to 1. The difference between the TONS and WTONS values is not visible on the graphs, which is why we highlight it in the following tables: 

$$
\begin{array}{|c|c|c|c|c|c|}
\hline 
& v & 0.05 & 0.1 & 0.15 & 0.2   \\
\hline 
\multirow{2}{*}{$G_4(v)$} & WTONS & 0.9487 & 0.8981 & 0.8482 & 0.7990   \\
\cline{2-6}
& TONS & 0.9481 & 0.8972 & 0.8473 & 0.7985   \\
\hline
\end{array}
$$

$$
\begin{array}{|c|c|c|c|c|c|}
\hline 
& v & 0.05 & 0.1 & 0.15 & 0.2   \\
\hline 
\multirow{2}{*}{$G_5(v)$} & WTONS & 0.9451 & 0.8913 & 0.8387 & 0.7865   \\
\cline{2-6}
& TONS & 0.9431 & 0.8874 & 0.8328 & 0.7795   \\
\hline
\end{array}
$$

\begin{figure}
\includegraphics{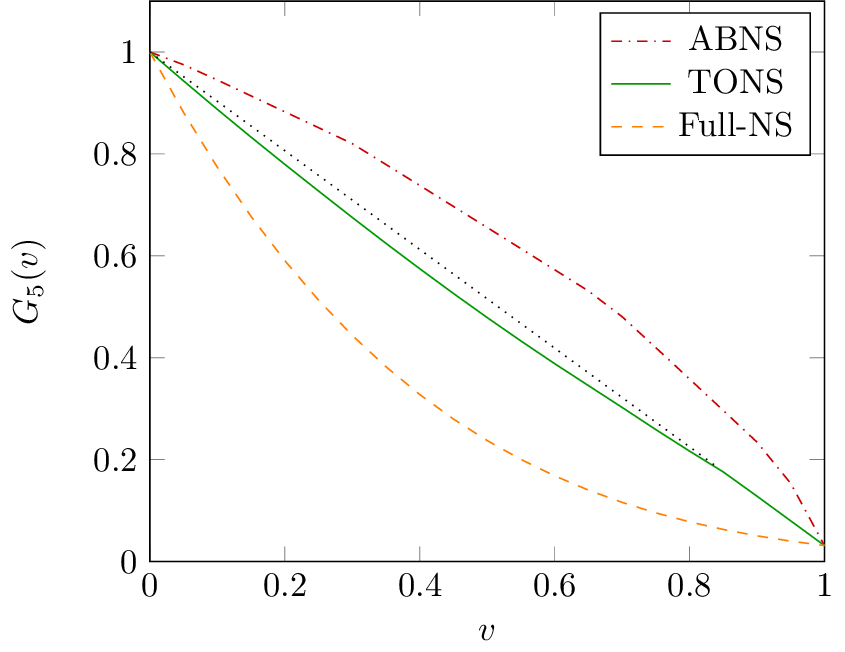}
\caption{Guessing probabilities for $n=5$. We add the line interpolating $(0,1)$ and $(1,\frac{1}{32})$ (black dotted line) to emphasize the breakdown of linear dependence of the TONS guessing probability for some $v$.}
\label{fig:N5}
\end{figure}

Carrying out the numerical optimization for larger $n$ becomes computationally too demanding, as the number of variables and constraints grows exponentially with $n$.  However, the results obtained for small $n$ already have implications for all $n$, as explained below.

\subsection{Implications for all $n$}
For the ABNS, TONS, and WTONS scenarios, we have found in the previous subsections that, contrarily to what happens in the Full-NS scenario, the independent strategy is not the optimal strategy for $n=\{2,3,4,5\}$, i.e., $G_n(v)>G^n_1(v)$. 

This implies in particular that one can improve the lower-bound $G_n(v)\geq G^n_1(v)$ for all $n$, as, instead of considering strategies where Eve guesses independently each individual outcome bit of Alice, one can now consider strategies where Eve guesses independently pairs, triples, quadruples or quintuples of outcome bits of Alice. For instance if $n=5k$, Eve can guess every successive quintuple of outcomes independently, and we have thus the lower-bound
\begin{equation}
G_n(v)=G_{5k}(v)\geq G_5^k(v)\,.
\end{equation}
In term of the min-entropy per run this corresponds to the lower-bound
\begin{equation}\label{entrop3}
\frac{H_{n}(v)}{n}=\frac{H_{5k}(v)}{5k}\leq \frac{H_5(v)}{5}\,,
\end{equation}
which is strictly smaller than the single-run min-entropy: $H_{n}(v)/n<H_1(v)$, as illustrated in Figures~\ref{fig:HnTONS} and~\ref{fig:HnABNS}.

\begin{figure}
\includegraphics{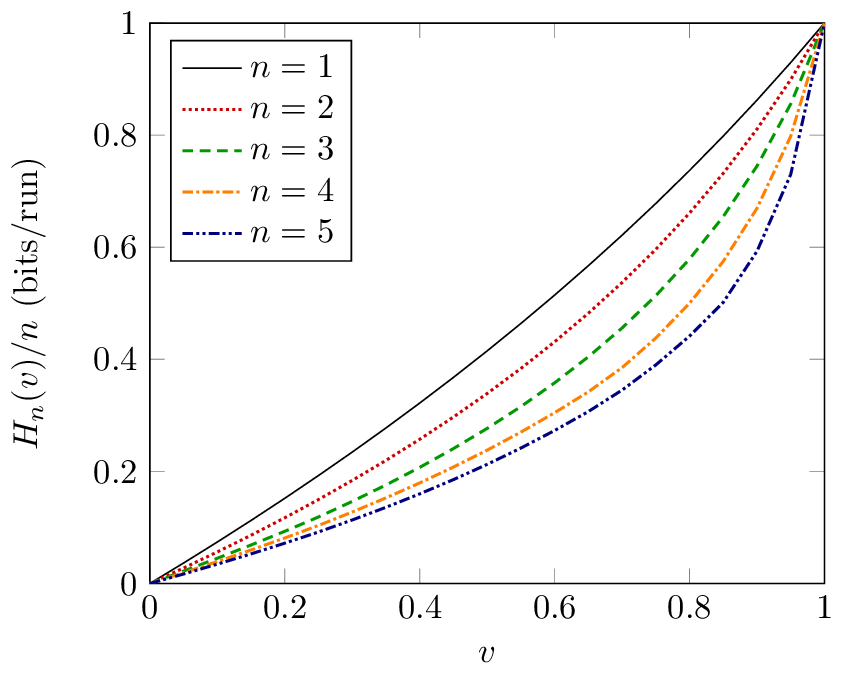}
\caption{Min-entropy rates obtained when Eve is able to guess individual outcomes bits, and pairs, triples, quadruples and quintuples of outcome bits, in the TONS scenario. The curves for the WTONS scenario are virtually the same, as the guessing probabilities for WTONS are either the same or very close to the ones for TONS.}
\label{fig:HnTONS}
\end{figure}

In other words, for multiple uses of the noisy PR-correlations, each instance of the PR-correlations carry less entropy than what one would have naively guessed from (\ref{SolutionN1}). 
This suggest, in analogy with other measures in quantum information, an asymptotic definition $\lim_{n\to\infty} \frac{H_n(v)}{n}$ of the randomness of noisy PR-correlations in the ABNS, TONS, and WTONS scenarios.

\begin{figure}
\includegraphics{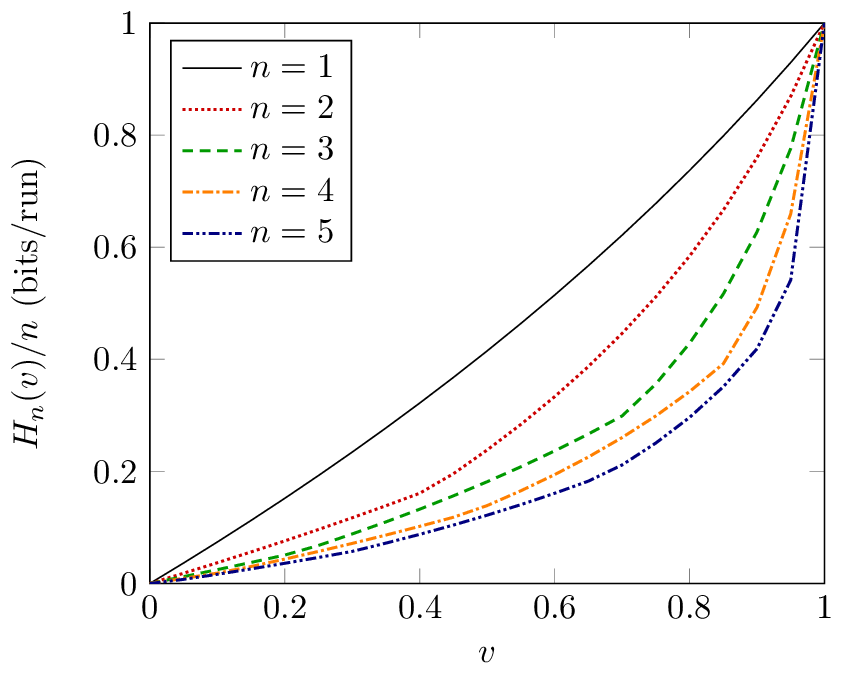}
\caption{Min-entropy rates obtained when Eve is able to guess individual outcomes bits, and pairs, triples, quadruples and quintuples of outcome bits, in the ABNS scenario.}
\label{fig:HnABNS}
\end{figure}

\section{Discussion}
We have investigated the randomness of $n$ noisy PR-boxes, which represent the paradigmatic example of non-local correlations and which are at the basis of many device-independent random number generation and quantum key distribution protocols. 

In the Full-NS scenario, where the $n$ noisy PR-correlations are obtained from $n$ pairs of -- possibly correlated -- but non-communicating devices, the probability to guess correctly the $n$ output bits of one party decreases exponentially with $n$, exactly as if the $n$ noisy PR-boxes were $n$ independent coins with a bias given by Eq. (\ref{SolutionN1}).
 
However, in the ABNS, TONS, and WTONS scenarios, where the $n$ noisy PR-correlations originate from a single pair of devices, either used sequentially $n$ times, or which produce $n$ outcome bits in one run, we have found that the randomness per PR-box can be significantly less than the individual randomness (\ref{SolutionN1}). In particular, we have found that, in the ABNS case, for noise values $v$ below some threshold, the \emph{total randomness} associated to $n\leq 5$ noisy PR-boxes is equal to the randomness of a single noisy PR-box. We conjecture that this holds for any $n$ for some suitable noise threshold. In the TONS and WTONS case, for the same values of $n$, we have found that the guessing probability is linear in $v$ for some region $[v_c^n ; 1]$. We conjecture that this holds for any $n$, but that $v_c^n$ tends to 1.  

Besides their fundamental interest, it is worth considering our results from the perspective of the current status of the security of device-independent random number generation and quantum key distribution protocols. In the Full-NS scenario, their security has been proven \cite{Masanes2014Full}. In the case of the ABNS scenario, there exists a no-go result: starting from $n$ noisy PR-boxes, it is not possible to extract, after privacy amplification, even a single bit that is arbitrarily close to uniform no matter how large $n$ is \cite{Hanggi2013Impossibility} (except if no noise is present, corresponding to $v=1$). In the case of the TONS and WTONS scenarios, the situation is less clear. Though there exist severe limitations on the randomness one can extract from $n$ noisy PR-boxes after privacy amplification \cite{Arnon2012Limits}, those results do not imply that DI RNG or QKD are necessarily impossible in these scenarios.

Interestingly, the Full-NS scenario, where security has been established, corresponds to the situation where the randomness of $n$ noisy PR-boxes accumulates with $n$ in an i.i.d.\ way, while in the ABNS, TONS, and WTONS scenarios, were security was proven to be impossible or is still an open question, the randomness per use of the PR-boxes decreases with $n$. Though the negative results that are presently known for the ABNS, TONS, and WTONS scenarios \cite{Arnon2012Limits} are obtained by taking into account limitations on privacy amplification in a no-signaling context, it is possible that these impossibility results can be traced back to a lack of randomness even before privacy amplification. 

To answer this question definitely, one would have to show that the smooth min-entropy is bounded by a sublinear (i.e. logarithmic or constant) function of $n$. The upper-bounds that we have obtained here are only concerned with the min-entropy, and thus do not imply any such impossibility result. Nevertheless, we believe that they pave the way to a new approach for studying the possibility of no-signalling privacy amplification, as no results were known concerning the min-entropy (smooth or non-smooth) in that context. Though our results do not exclude, in the ABNS, TONS and WTONS scenarios, a linear increase of the min-entropy in the asymptotic limit $n \rightarrow \infty$, they imply an increase at a rate that is significantly lower than what one would naively deduce from the single-copy value (\ref{SolutionN1}). 

\section*{Acknowledgements}

Support from the ERC CoG QITBOX, the AXA Chair in Quantum Information Science, the Spanish MINECO (QIBEQI FIS2016-80773-P and Severo Ochoa SEV-2015-0522), the Generalitat de Catalunya (SGR1381 and CERCA Program), the Fundaci\'o Privada Cellex, and the Fondation Wiener-Anspach is acknowledged. BB acknowledges support from the Secretaria d'Universitats i Recerca del Departament d'Economia i Coneixement de la Generalitat de Catalunya and the European Social Fund - FEDER. SP is a Research Associate of the Fonds de la Recherche Scientifique (F.R.S.-FNRS).

\bibliographystyle{apsrev4-1}
\bibliography{TONSrandomness}

\begin{widetext}
\section*{APPENDICES}
\appendix

In Appendix~\ref{app:OneRound}, we prove the guessing probability value for $n=1$ (Eq.~\ref{SolutionN1}). We then prove two claims made in the main text: $G_n$ is independent of the choice of inputs $(\mathbf{x}^*,\mathbf{y}^*)$ (App.~\ref{app:Symmetries});   the product of $n$ perfect PR-correlations is a vertex of the polytopes associated with any of the no-signaling type constraints introduced in this work (App.~\ref{app:ProductPR}).

We proceed with the proofs of all the guessing probability values given in Table~\ref{tab:1}. We first simplify the optimization problem (\ref{opt-P-general}) using symmetry arguments, and we give its general expression, as well as its associated dual formulation, that follow from these symmetries (App.~\ref{app:PrimalDual}). The detailed expression of the feasible points for these two problems that yield the same objective function value can be found in~\cite{Bourdoncle2018Quantifying}. We provide the necessary information about how to read these files in~Appendix~\ref{app:SolPrimalDual}.

\section{Solution for $n=1$}
\label{app:OneRound}

We first give a feasible point for~\eqref{opt-P-general} that attains the bound given in Eq.~\eqref{SolutionN1}. Let $\{D_i\}_{i=1}^{4}$ be four deterministic behaviors defined as:
\begin{eqnarray}
\begin{aligned}
D_1(a,b|x,y)&= \delta_{a,0}\delta_{b,0}, \\
D_2(a,b|x,y)&= \delta_{a,x} \delta_{b,0}, \\ 
D_3(a,b|x,y)&= \delta_{a,0}\delta_{b,y}, \\
D_4(a,b|x,y)&= \delta_{a,x} \delta_{b,y+1}.
\end{aligned}
\label{DetBox}
\end{eqnarray}
Take 
\begin{align}
& P(\alpha=0)=P(\alpha=1)=\frac{1}{2}, \nonumber \\
& P_{\alpha=0}(ab|xy)=\frac{1-v}{4}(D_1(ab|xy)+D_2(ab|xy)+D_3(ab|xy)+D_4(ab|xy))+v\mathrm{PR}_1(ab|xy), \\[5pt]
& P_{\alpha=1}(ab|xy)=P_{\alpha=0}(\overline{ab}|xy), \nonumber
\end{align}
where, for some $s \in \{0,1\}$,  $\overline{s}$ denotes its complement. Then $\{P(\alpha),P_{\alpha}\}_{\alpha \in \{0,1\}}$ is a feasible point for~\eqref{opt-P-general} that has objective value $1-\frac{v}{2}$. 

Note now that, when $n=1$, Eq.~\eqref{eq:mas} implies $G_1(v) \leq 1-\frac{v}{2}$. This concludes the proof of Eq.~\eqref{SolutionN1}.

\section{Symmetries of the guessing probability problem}
\label{app:Symmetries}

The following transformations allow us to express (\ref{opt-P-general}) in a reduced form.    
\begin{lemma}
Let $(T_1^i), (T_2^i)$ and $(T_3^i)$ be transformations that map a behavior $\PNrounds$ onto another behavior by re-ordering its inputs and outputs in the following way:
\begin{equation*}
(T_1^i) :  \left\{ 
\begin{array}{l} 
a_{i} \rightarrow \overline{a_{i}} \\
b_{i} \rightarrow \overline{b_{i}} \\
\alpha_{i} \rightarrow  \overline{\alpha_{i}} 
\end{array}
\right., \qquad \qquad
 (T_2^i) :  \left\{ 
\begin{array}{l} 
a_{i} \rightarrow a_{i} \oplus x_{i} \\
y_{i} \rightarrow \overline{y_{i}} \\
\end{array}
\right.,  \qquad \qquad
 (T_3^i) :  \left\{ 
\begin{array}{l} 
b_{i} \rightarrow b_{i} \oplus y_{i} \\
x_{i} \rightarrow \overline{x_{i}} \\
\end{array}
\right. .
\end{equation*}

Then, for all i and for all the \textup{NS} conditions, $(T_1^i), (T_2^i)$ and $(T_3^i)$ map a feasible point for \eqref{opt-P-general} onto another feasible point. Moreover,  $(T_1^i)$ preserves the objective function value for all possible  \textup{NS} conditions, and $(T_2^i)$ preserves the objective function value for \textup{\{Full-NS, ABNS, TONS\}}.
\label{lem:Transfo}
\end{lemma}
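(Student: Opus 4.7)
The plan is to verify, for each transformation, two properties: (a) feasibility is preserved, and (b) where claimed, the objective value is preserved. Since each $T_k^i$ is an involutive bijection acting only on the labels at round $i$, every relevant quantity transforms by a clean change of variables.

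For (a), feasibility decomposes into the marginal constraint $\sum_{\boldsymbol{\alpha}} P(\boldsymbol{\alpha})P_{\boldsymbol{\alpha}}=\prod_i \mathrm{PR}_v$ and the NS constraint on each $P_{\boldsymbol{\alpha}}$. The marginal constraint reduces, round by round, to invariance of $\mathrm{PR}_v$ under the local action of the transformation, which in turn reduces to preservation of the parity equation $a+b\equiv xy \pmod{2}$. This is immediate for $T_1^i$; for $T_2^i$ one checks that, modulo $2$, $(a\oplus x)+b = a+b+x$ and $x\overline{y}=x(1+y)=xy+x$, so the parity condition is preserved; $T_3^i$ is symmetric. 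For the NS constraints, each condition asserts the equality of two marginal distributions of $P_{\boldsymbol{\alpha}}$ involving different subsets of inputs. Since the transformation is a bijective relabeling of one round of inputs and outputs, marginalization commutes with it and the NS equality translates directly. A brief case-by-case check over Full-NS, ABNS, TONS, and WTONS finishes this part.

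For (b), I would treat $T_1^i$ first: the joint flip $(a_i,\alpha_i)\mapsto(\overline{a_i},\overline{\alpha_i})$, combined with the induced relabeling $P(\boldsymbol{\alpha})\mapsto P(\boldsymbol{\alpha}^{(i)})$, preserves the event $\mathbf{a}=\boldsymbol{\alpha}$, and a change of variable $\boldsymbol{\beta}=\boldsymbol{\alpha}^{(i)}$ in the objective sum recovers the original value, regardless of the NS type. For $T_2^i$, at $\mathbf{x}^*=\mathbf{y}^*=\mathbf{0}$ the output transformation $a_i\mapsto a_i\oplus x_i^*$ is trivial (since $x_i^*=0$) but the input transformation flips $y_i^*$, so the transformed objective equals $\sum_{\boldsymbol{\alpha}} P(\boldsymbol{\alpha})P_{\boldsymbol{\alpha}}(\mathbf{a}=\boldsymbol{\alpha}\,|\,\mathbf{0},\mathbf{y}')$ with $\mathbf{y}'=(0,\ldots,1_i,\ldots,0)$. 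Under Full-NS, ABNS, or TONS, Alice's full marginal $P_{\boldsymbol{\alpha}}(\mathbf{a}\,|\,\mathbf{x},\mathbf{y})$ is independent of $\mathbf{y}$, so the objective at $\mathbf{y}'$ coincides with the objective at $\mathbf{0}$; under WTONS this fails because Alice's marginal can legitimately depend on Bob's past inputs.

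The main obstacle is the bookkeeping required to track the relabelings through all four NS conditions for part (a), and to pin down precisely which NS conditions force Alice's full marginal over the whole string $\mathbf{a}$ to be independent of $\mathbf{y}$ for part (b), thereby explaining the exact cutoff at WTONS for $T_2^i$.
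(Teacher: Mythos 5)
Your proposal is correct and follows essentially the same route as the paper's proof: feasibility via invariance of $\mathrm{PR}_v$ under the round-$i$ relabelings together with the observation that the NS constraints are stable under them, the objective for $(T_1^i)$ via a change of variables in $\boldsymbol{\alpha}$, and the objective for $(T_2^i)$ via the independence of Alice's marginal from $\mathbf{y}$ under the ABNS-implying conditions (which is exactly the paper's reason the argument stops at WTONS). Your explicit parity check $x\overline{y}=xy+x$ is slightly more detailed than the paper's one-line invariance claim, but the argument is the same.
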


\begin{proof}
We first prove that a feasible point is mapped onto another feasible point. For a given round $i$, let $(T_j^i)$ be one these transformations and let $\{P(\boldsymbol{\alpha}), \PNrounds \}$ be a feasible point for~\eqref{opt-P-general} for some NS condition. Let $\{\tilde{P}(\boldsymbol{\alpha}), \tilde{P}_{\boldsymbol{\alpha}}(\mathbf{a},\mathbf{b}|\mathbf{x},\mathbf{y}) \}$ be the image of this point by $(T_j^i)$. Since the NS condition involves all $(\mathbf{a,b,x,y})$, and since $(T_j^i)$ simply reorders some elements of $\{P(\boldsymbol{\alpha}), \PNrounds \}$ in an individual round $i$,  $\{\tilde{P}(\boldsymbol{\alpha}), \tilde{P}_{\boldsymbol{\alpha}}(\mathbf{a},\mathbf{b}|\mathbf{x},\mathbf{y}) \}$ satisfies the same NS condition. Moreover, since the behavior $\mathrm{PR}_v(a_ib_i|x_iy_i)$ is invariant under $(T_j^i)$, $\{\tilde{P}(\boldsymbol{\alpha}), \tilde{P}_{\boldsymbol{\alpha}}(\mathbf{a},\mathbf{b}|\mathbf{x},\mathbf{y}) \}$ also satisfies the constraint on the marginals. $(T_j^i)$  thus maps a feasible point for (\ref{opt-P-general}) onto another feasible point.

We now show that $(T_1^i)$ preserves the objective function value of~\eqref{opt-P-general} for all the NS conditions. For simplicity, let us take $i=1$, the argument for $i>1$ being the same. Let $\{\tilde{P}(\boldsymbol{\alpha}), \tilde{P}_{\boldsymbol{\alpha}}(\mathbf{a},\mathbf{b}|\mathbf{x},\mathbf{y}) \}$ be the image of a feasible point $\{P(\boldsymbol{\alpha}), \PNrounds \}$ by $(T_1^1)$. Then: 
\begin{equation}
\sum_{\boldsymbol{\alpha},\mathbf{b}} \tilde{P}(\boldsymbol{\alpha})\tilde{P}_{\boldsymbol{\alpha}}(\boldsymbol{\alpha},\mathbf{b}|\mathbf{0,0}) =  \sum_{\boldsymbol{\alpha},\mathbf{b}} P(\overline{\alpha_1}\boldsymbol{\alpha_{>1}}){P}_{\overline{\alpha_1}\boldsymbol{\alpha_{>1}}}(\overline{\alpha_1}\boldsymbol{\alpha_{>1}},\overline{b_1}\mathbf{b_{>1}}|\mathbf{0,0}) = \sum_{\boldsymbol{\alpha},\mathbf{b}} P(\boldsymbol{\alpha})P_{\boldsymbol{\alpha}}(\boldsymbol{\alpha},\mathbf{b}|\mathbf{0,0}) 
\end{equation}

We now show that $(T_2^i)$ preserves the objective function value of~\eqref{opt-P-general} for all but the WTONS condition. We again set $i=1$, and denote $\{\tilde{P}(\boldsymbol{\alpha}), \tilde{P}_{\boldsymbol{\alpha}}(\mathbf{a},\mathbf{b}|\mathbf{x},\mathbf{y}) \}$ the image by $(T_2^1)$ of a feasible point $\{P(\boldsymbol{\alpha}), \PNrounds \}$ for the Full-NS, ABNS or TONS condition. Then: 
\begin{equation}
\begin{aligned}
\sum_{\boldsymbol{\alpha},\mathbf{b}} \tilde{P}(\boldsymbol{\alpha})\tilde{P}_{\boldsymbol{\alpha}}(\boldsymbol{\alpha},\mathbf{b}|\mathbf{0,0}) & =  \sum_{\boldsymbol{\alpha},\mathbf{b}} P(\boldsymbol{\alpha}){P}_{\boldsymbol{\alpha}}(\boldsymbol{\alpha},\mathbf{b}|\mathbf{0},10\hdots0) \\
& = \sum_{\boldsymbol{\alpha}} P(\boldsymbol{\alpha}) \sum_{\mathbf{b}}P_{\boldsymbol{\alpha}}(\boldsymbol{\alpha},\mathbf{b}|\mathbf{0},10\hdots0) \\
& =  \sum_{\boldsymbol{\alpha}} P(\boldsymbol{\alpha}) \sum_{\mathbf{b}}P_{\boldsymbol{\alpha}}(\boldsymbol{\alpha},\mathbf{b}|\mathbf{0},00\hdots0) 
\end{aligned}
\end{equation}
where the last equality holds because, for all $\alpha$, $P_{\boldsymbol{\alpha}}$ is ABNS.
\end{proof}

Thanks to $(T_2^i)$ and $(T_3^i)$, we can now prove that the optimal value $G_n(v)$ defined in~\eqref{opt-P-general} is independent of $(\mathbf{x^*,y^*})$. Let us momentarily call $G_n^{\mathbf{x^*,y^*}}(v)$ the solution of~\eqref{opt-P-general}. Let us assume that $\{P(\boldsymbol{\alpha}), \PNrounds \}$ is a feasible point for~\eqref{opt-P-general} that achieves the value $G_n^{\mathbf{0,0}}(v)$. We then construct $\{P(\boldsymbol{\alpha}), \tilde{P}_{\boldsymbol{\alpha}}(\mathbf{a},\mathbf{b}|\mathbf{x},\mathbf{y}) \}$ by applying $(T_2^1)$ onto $\{P(\boldsymbol{\alpha}), \PNrounds \}$. Then: 
\begin{equation}
\begin{aligned}
\sum_{\boldsymbol{\alpha},\mathbf{b}} P(\boldsymbol{\alpha})\tilde{P}_{\boldsymbol{\alpha}}(\boldsymbol{\alpha},\mathbf{b}|00 \hdots 0,10 \hdots 0) & =  \sum_{\boldsymbol{\alpha},\mathbf{b}} P(\boldsymbol{\alpha}){P}_{\boldsymbol{\alpha}}(\boldsymbol{\alpha},\mathbf{b}|00 \hdots 0,00 \hdots 0)  \\
& = G_n^{\mathbf{00}}(v)
\end{aligned}
\end{equation}
This implies $G_n^{00 \hdots 0,10 \hdots 0}(v) \geq G_n^{\mathbf{0,0}}(v)$. Let us now assume that $\{P(\boldsymbol{\alpha}), \PNrounds \}$ is a feasible point for~\eqref{opt-P-general} that achieves the value $G_n^{00 \hdots 0,10 \hdots 0}(v)$, and construct $\{P(\boldsymbol{\alpha}), \tilde{P}_{\boldsymbol{\alpha}}(\mathbf{a},\mathbf{b}|\mathbf{x},\mathbf{y}) \}$ by applying $(T_2^1)$ onto it. Then: 
\begin{equation}
\begin{aligned}
\sum_{\boldsymbol{\alpha},\mathbf{b}} P(\boldsymbol{\alpha})\tilde{P}_{\boldsymbol{\alpha}}(\boldsymbol{\alpha},\mathbf{b}|00 \hdots 0,00 \hdots 0) & =  \sum_{\boldsymbol{\alpha},\mathbf{b}} P(\boldsymbol{\alpha}){P}_{\boldsymbol{\alpha}}(\boldsymbol{\alpha},\mathbf{b}|00 \hdots 0,10 \hdots 0)  \\
& = G_n^{00 \hdots 0,10 \hdots 0}(v)
\end{aligned}
\end{equation}
This implies $G_n^{\mathbf{0,0}}(v) \geq G_n^{00 \hdots 0,10 \hdots 0}(v)$ and thence $G_n^{\mathbf{00}}(v) = G_n^{00 \hdots 0,10 \hdots 0}(v)$. The same construction can be done for all other values of $\mathbf{y}$ by applying $(T_2^i)$ whenever $y_i = 1$, thus proving $G_n^{\mathbf{0,0}}(v) = G_n^{\mathbf{0,y}}(v)$ for all $\mathbf{y}$.

We now assume that $\{P(\boldsymbol{\alpha}), \PNrounds \}$ is a feasible point for~\eqref{opt-P-general} that achieves the value $G_n^{\mathbf{0}, \mathbf{y}}(v)$. For some $\mathbf{x} \in \{0,1\}^n$, we construct $\{P(\boldsymbol{\alpha}), \tilde{P}_{\boldsymbol{\alpha}}(\mathbf{a},\mathbf{b}|\mathbf{x},\mathbf{y}) \}$ by applying $(T_3^i)$ onto it whenever $x_i=1$. Then: 
\begin{equation}
\begin{aligned}
\sum_{\boldsymbol{\alpha},\mathbf{b}} P(\boldsymbol{\alpha})\tilde{P}_{\boldsymbol{\alpha}}(\boldsymbol{\alpha},\mathbf{b}|\mathbf{x,y}) =  \sum_{\boldsymbol{\alpha},\mathbf{b}} P(\boldsymbol{\alpha}){P}_{\boldsymbol{\alpha}}(\boldsymbol{\alpha},\mathbf{b}\oplus \mathbf{xy}|\mathbf{0,y}) & =  \sum_{\boldsymbol{\alpha},\mathbf{b}} P(\boldsymbol{\alpha}){P}_{\boldsymbol{\alpha}}(\boldsymbol{\alpha},\mathbf{b}|\mathbf{0,y})  \\
& = G_n^{\mathbf{0,y}}(v)
\end{aligned}
\end{equation}
where the first equality holds because we applied $(T_3^i)$ only when $x_i=1$ and the second one holds because we sum over $\mathbf{b}$. This implies that $G_n^{\mathbf{x,y}}(v) \geq G_n^{\mathbf{0,y}}(v)$. Let us now assume that $\{P(\boldsymbol{\alpha}), \PNrounds \}$ is a feasible point for~\eqref{opt-P-general} that achieves the value $G_n^{\mathbf{x}, \mathbf{y}}(v)$. We construct $\{P(\boldsymbol{\alpha}), \tilde{P}_{\boldsymbol{\alpha}}(\mathbf{a},\mathbf{b}|\mathbf{x},\mathbf{y}) \}$ by applying $(T_3^i)$ onto it whenever $x_i=1$. Then:
\begin{equation}
\begin{aligned}
\sum_{\boldsymbol{\alpha},\mathbf{b}} P(\boldsymbol{\alpha})\tilde{P}_{\boldsymbol{\alpha}}(\boldsymbol{\alpha},\mathbf{b}|\mathbf{0,y}) =  \sum_{\boldsymbol{\alpha},\mathbf{b}} P(\boldsymbol{\alpha}){P}_{\boldsymbol{\alpha}}(\boldsymbol{\alpha},\mathbf{b}\oplus \mathbf{xy}|\mathbf{x,y}) & =  \sum_{\boldsymbol{\alpha},\mathbf{b}} P(\boldsymbol{\alpha}){P}_{\boldsymbol{\alpha}}(\boldsymbol{\alpha},\mathbf{b}|\mathbf{x,y})  \\
& = G_n^{\mathbf{x,y}}(v)
\end{aligned}
\end{equation}
This implies $G_n^{\mathbf{0,y}}(v) \geq G_n^{\mathbf{x,y}}(v)$ and thence $G_n^{\mathbf{x,y}}(v) = G_n^{\mathbf{0,y}}(v)$. 

Altogether, this proves that $G_n^{\mathbf{x,y}}(v) = G_n^{\mathbf{0,0}}(v)$ for all $(\mathbf{x,y})$, and thus that $G_n(v)$ is properly defined.

\section{Product of $n$ perfect PR-correlations}
\label{app:ProductPR} 

We now show that the product of $n$ PR-boxes is a vertex of any of the no-signaling polytopes we introduced in the main text. We do it for $n=2$, the generalization to $n \geq 3$ is straightforward. Let us assume that there exists two ABNS (resp. WTONS) joint distributions $P_1$ and $P_2$ such that:
\begin{equation}
\mathrm{PR}_1(a_1,b_1|x_1,y_1) \times \mathrm{PR}_1(a_2,b_2|x_2,y_2) = \lambda_1 P_1(\mathbf{a,b}|\mathbf{x,y})+\lambda_2 P_2(\mathbf{a,b}|\mathbf{x,y})
\end{equation} 
for some $(\lambda_1,\lambda_2)\in [0,1]$ such that $\lambda_1 + \lambda_2=1$. 

Then:
\begin{equation}
\begin{aligned}
\mathrm{PR}_1(a_1,b_1|x_1,y_1) & =  \lambda_1 \sum_{a_2,b_2} P_1(\mathbf{a,b}|\mathbf{x,y})+\lambda_2  \sum_{a_2,b_2} P_2(\mathbf{a,b}|\mathbf{x,y}) \\
& =  \lambda_1 P_1(a_1,b_1|x_1,x_2,y_1,y_2) + \lambda_2 P_2(a_1,b_1|x_1,x_2,y_1,y_2).
\end{aligned}
\label{ConvexDecomposition}
\end{equation}

Let us fix a specific value $(x_2^*,y_2^*)$ for $(x_2,y_2)$. Then $P_1(a_1,b_1|x_1,x_2^*,y_1,y_2^*)$ is a no-signaling bipartite binary behavior. Indeed,
\begin{equation}
\sum_{b_1} P_1(a_1,b_1|x_1,x_2^*,y_1,y_2^*) = \sum_{a_2, b_1, b_2} P_1(a_1, a_2,b_1, b_2|x_1,x_2^*,y_1,y_2^*)
\end{equation}
is independent of $y_1$ because $P_1$ is ABNS (resp. WTONS), and $\sum_{a_1} P_1(a_1,b_1|x_1,x_2^*,y_1,y_2^*)$ is independent of $x_1$ for the same reason. The same goes for $P_2(a_1,b_1|x_1,x_2^*,y_1,y_2^*)$. Since the PR-box is a vertex of the polytope of bipartite binary no-signaling behaviors, Eq.~\eqref{ConvexDecomposition} implies 
\begin{equation}
P_1(a_1,b_1|x_1,x_2^*,y_1,y_2^*) = P_2(a_1,b_1|x_1,x_2^*,y_1,y_2^*) = \mathrm{PR}_1(a_1,b_1|x_1,y_1) 
\end{equation} 
for all values of $(x_2^*,y_2^*)$. The same holds for $P_1(a_2,b_2|x_1^*,x_2,y_1^*,y_2)$ and  $P_2(a_2,b_2|x_1^*,x_2,y_1^*,y_2)$, for all values of $(x_1^*,y_1^*)$. This implies: 
\begin{equation}
P_1(\mathbf{a,b}|\mathbf{x,y})=P_2(\mathbf{a,b}|\mathbf{x,y})=\mathrm{PR}_1(a_1,b_1|x_1,y_1) \times \mathrm{PR}_1(a_2,b_2|x_2,y_2).
\end{equation} 

The product of two PR-boxes cannot be decomposed over different joint distributions in ABNS (resp. WTONS): it is thus a vertex of the ABNS (resp. WTONS) polytope. Since Full-NS and TONS are subsets of these polytopes, it also implies that it is a vertex of Full-NS and TONS.  

\section{Primal and dual form of the guessing probability problem} 
\label{app:PrimalDual}

The symmetry $(T_1^i)$ given in Appendix~\ref{app:Symmetries} implies that the solutions to the problem defined by Equation (\ref{opt-P-general}) can be found in the reduced space:
\begin{equation}
\mathcal{S}=\Big\{ (P(\boldsymbol{\alpha}),\PNrounds) \Big\rvert \PNrounds = P_{\boldsymbol{\alpha}=\mathbf{0}} (\mathbf{\overline{a,b}^{\alpha}}|\mathbf{x,y}), P(\boldsymbol{\alpha}) = \frac{1}{2^n}\Big\}
\label{eq:ReducedSpace}
\end{equation}
where $\overline{a_i,b_i}^{\alpha_i}=\begin{cases} a_i,b_i \textrm{ if } \alpha_i=0, \\ \overline{a_i},\overline{b_i} \textrm{ if } \alpha_i=1. \end{cases}$

From here on, we'll thus only consider distributions with such symmetries, and we'll write $P(\mathbf{a,b}|\mathbf{x,y})$ for $P_{\boldsymbol{\alpha}=\mathbf{0}}(\mathbf{a,b}|\mathbf{x,y})$. Note that, for $P \in \mathcal{S}$, the objective function of (\ref{opt-P-general}) becomes 
\begin{equation}
\sum_{\boldsymbol{\alpha}, \mathbf{b}}P(\boldsymbol{\alpha})P_{\boldsymbol{\alpha}}(\boldsymbol{\alpha},\mathbf{b}|\mathbf{0,0})=\sum_{\boldsymbol{\alpha}, \mathbf{b}}\frac{1}{2^n}P_{\boldsymbol{0}}(\mathbf{0,b}|\mathbf{0,0})=\sum_{\mathbf{b}}P(\mathbf{0,b}|\mathbf{0,0}).
\end{equation}

Moreover, a constraint on the marginals is now expressed in the following way:
\begin{equation}
\begin{aligned}
\sum_{\boldsymbol{\alpha}} P(\boldsymbol{\alpha})\PNrounds=\prod_{i=1}^n \mathrm{PR}_v(a_i,b_i|x_i,y_i)  & \Leftrightarrow \frac{1}{2^n} \sum_{\boldsymbol{\alpha}} P_{\boldsymbol{\alpha}=\mathbf{0}} (\mathbf{\overline{a,b}^{\alpha}}|\mathbf{x,y}) = \prod_{i=1}^n \mathrm{PR}_v(a_i,b_i|x_i,y_i) \\
& \Leftrightarrow \frac{1}{2^n} \sum_{\boldsymbol{a}} P (\mathbf{a,a\oplus b}|\mathbf{x,y}) = \prod_{i=1}^n \mathrm{PR}_v(0,b_i|x_i,y_i). 
\end{aligned}
\end{equation}

The optimization problem defined in (\ref{opt-P-general}) can thus be written as:
\begin{equation}
\begin{array}{>{\displaystyle}r<{\displaystyle}>{\displaystyle}c<{\displaystyle}>{\displaystyle}l<{\displaystyle}}
G_n(v) = & \max & \sum_{\mathbf{b}}P(\mathbf{0,b}|\mathbf{0,0}) \\[10pt]
& \mathrm{s.t.} & \forall (\mathbf{x}, \mathbf{y}, \mathbf{b}) \in \{0,1\}^{3n}, \ \sum_{\boldsymbol{a}} P (\mathbf{a,a\oplus b}|\mathbf{x,y}) = 2^n \times \prod_{i=1}^n \mathrm{PR}_v(0,b_i|x_i,y_i)\\[12pt]
& & P \in \mathrm{NS}
\end{array}
\label{New-opt-P}
\end{equation}

In order to construct the dual of~\eqref{New-opt-P}, note that $P$ can be seen as a vector, on which two kinds of constraints apply: on the one hand positivity, as it represents some probability distributions, on the other hand linear constraints, that arise both from the marginal constraints and the no-signaling scenario that is considered. 

The optimization problem~\eqref{New-opt-P} and its associated dual problem can then be summarized as:

\noindent
\begin{minipage}{0.45\linewidth}
\begin{equation}
\begin{array}{>{\displaystyle}r<{\displaystyle}>{\displaystyle}c<{\displaystyle}>{\displaystyle}l<{\displaystyle}}
G_n(v) = & \max & \mathbf{c}^\top \mathbf{p} \\[10pt]
& \mathrm{s.t.} & A\mathbf{p}=\mathbf{b}\\[10pt]
& & \mathbf{p} \geq 0
\end{array}
\label{Primal}
\end{equation}
\end{minipage}
\begin{minipage}{0.45\linewidth}
\begin{equation}
\begin{array}{>{\displaystyle}r<{\displaystyle}>{\displaystyle}c<{\displaystyle}>{\displaystyle}l<{\displaystyle}}
G_n(v) = & \min & \mathbf{b}^\top \mathbf{y} \\[10pt]
& \mathrm{s.t.} & A^\top \mathbf{y} \geq \mathbf{c}
\end{array}
\label{Dual}
\end{equation}
\end{minipage}

\vspace{10pt}
\noindent
where and $A$ and $\mathbf{b}$ describe the marginal and no-signaling constraints and $c_i=\begin{cases} 1 \textrm{ if } p_i = P(\mathbf{0},\mathbf{b} | \mathbf{0,0}), \\ 0 \textrm{ otherwise. } \end{cases}$

Strong duality holds here because~\eqref{Primal} is linear and feasible (the target correlation, i.e., $n$ noisy i.i.d. PR boxes, is always a solution). This implies that finding the optimum now amounts to finding feasible points for these two problems that yield the same objective function value.

\section{Solutions of the primal and dual problems}
\label{app:SolPrimalDual}

The solutions of \eqref{Primal} and \eqref{Dual} when $n=2,3$ can be found in~\cite{Bourdoncle2018Quantifying}. Since $G^n(v)$ is the same for TONS and WTONS when $n=2,3$, we give only a primal feasible point for TONS and a dual feasible point for WTONS with the same objective function value, which is sufficient to prove the values given in Table~\ref{tab:1} for TONS and WTONS. Indeed, let us call momentarily $p_{TONS}^*$ (resp. $p_{WTONS}^*$) the solution of~\eqref{Primal} for TONS (resp. WTONS), and $d_{WTONS}^*$ the solution of~\eqref{Dual} for WTONS. Let us call $p_{TONS}$ the objective function value associated to our primal feasible point for TONS, and $d_{WTONS}$ the objective function value associated to our dual feasible point for WTONS. We then have 
\begin{align}
p_{TONS} & \leq p_{TONS}^*,\\
d_{WTONS}^* & \leq d_{WTONS}.
\end{align}

Moreover, $p_{TONS}^* \leq p_{WTONS}^*$ because TONS $\subset$ WTONS and $p_{WTONS}^* = d_{WTONS}^*$ because strong quality holds. Altogether, this gives:
\begin{equation}
p_{TONS} \leq p_{TONS}^* \leq d_{WTONS}^* \leq d_{WTONS}.
\end{equation}
Finding a primal feasible point for TONS and a dual feasible point for WTONS such that $p_{TONS}=d_{WTONS}$ is thus sufficient to solve~\eqref{opt-P-general} both for TONS and WTONS. 

For the solutions of~\eqref{Primal}, we give only $P(\mathbf{a,b}|\mathbf{x,y=0})$: since the symmetry $(T_2^i)$ is valid both for TONS and ABNS, the distributions for other values of $\mathbf{y}$ can be derived from $P(\mathbf{a,b}|\mathbf{x,y=0})$ alone, by applying the corresponding transformation. 

For the solutions of~\eqref{Dual} to be defined without ambiguity, the order of the constraints listed in the matrix $A$ and vector $\mathbf{b}$ should be fixed. We thus include in~\cite{Bourdoncle2018Quantifying} the scripts that construct the specific matrices $A$ and vectors $\mathbf{b}$ for which our dual solutions are defined. 

\end{widetext}

\end{document}